\theoremstyle{plain}
\newtheorem{theorem}{Theorem}
\newtheorem{corollary}[theorem]{Corollary}
\newtheorem{lemma}[theorem]{Lemma}
\newcommand{\ud}{\,\mathrm{d}}
\renewcommand{\epsilon}{\varepsilon}
\newcommand{\bb}{\mathbf{b}}
\newcommand{\x}{\mathbf{x}}
\newcommand{\p}{\mathbf{p}}
\title{\bf Randomized learning-augmented auctions\\
with revenue guarantees}
\author{
    Ioannis Caragiannis \and
    Georgios Kalantzis}
\date{Department of Computer Science, Aarhus University\\
    iannis@cs.au.dk, 202202869@post.au.dk}
\begin{document}
\maketitle

\begin{abstract}
We consider the fundamental problem of designing a truthful single-item auction with the challenging objective of extracting a large fraction of the highest agent valuation as revenue. Following a recent trend in algorithm design, we assume that the agent valuations belong to a known interval, and a (possibly erroneous) prediction for the highest valuation is available. Then, auction design aims for high consistency and robustness, meaning that, for appropriate pairs of values $\gamma$ and $\rho$, the extracted revenue should be at least a $\gamma$- or $\rho$-fraction of the highest valuation when the prediction is correct for the input instance or not. We characterize all pairs of parameters $\gamma$ and $\rho$ so that a randomized $\gamma$-consistent and $\rho$-robust auction exists. Furthermore, for the setting in which robustness can be a function of the prediction error, we give sufficient and necessary conditions for the existence of robust auctions and present randomized auctions that extract a revenue that is only a polylogarithmic (in terms of the prediction error) factor away from the highest agent valuation.
\end{abstract}

\section{Introduction}
We revisit one of the most central topics in microeconomic theory, namely the design of {\em single-item auctions}, under the lens of the recent trend of {\em learning-augmented algorithms}~\citep{MV20}. The basic auction setting consists of a seller who has an item for sale and potential buyers (or agents), each having private valuations for the item. An auction is then used to sell the item. The agents submit their bids to the seller, who then decides who should get the item and at which price. 

Agents are strategic; they may consider bidding non-truthfully (i.e., submit a bid that is different than their valuation for the item) if this can increase their utility. Since non-truthful bids may make reasoning about the agents' behaviour and the auction's outcome difficult, a large part of auction design theory has focused on truthful auctions.

For single-item auctions (and, single-parameter environments, more generally), the auction designer has essentially to define two components: an allocation rule and a payment rule. Both are functions taking the agent bids as input. The allocation rule returns the agent who should get the item (if any), and the payment rule returns the payment the agents should give to the seller.

The design of truthful auctions requires careful selection of both the allocation and the payment rules. For example, the celebrated second-price auction~\citep{V61} allocates the item to the agent with the highest bid, who pays the second-highest bid to the seller. The second-price auction is truthful. Even though proving this formally is rather easy and usually serves as warming-up material in introductory textbooks to mechanism design (e.g., see~\cite{R16}), doing so for more sophisticated auctions can be tricky. Fortunately, the beautiful theory of~\cite{M81} provides us with a powerful toolset for single-item auction design (and, more generally, for mechanism design in single-parameter environments). The infamous Myerson's Lemma restricts the allocation rules that can be used in truthful auctions to those having a monotonicity property. Furthermore, the allocation rule identifies the payment function in an almost unique way.

The second-price auction has additional nice economic properties, e.g., it maximizes social welfare. Unfortunately, it may result in poor revenue. To get revenue guarantees, knowledge of statistical information (e.g., in the form of the probability distributions of the agent valuations) can be very useful to the auction designer. Then, a usual trick is to design auctions that use reserve prices, which can secure a high payment by the highest bidder (or the auction winner, in general), even when the competition is low.

In this paper, we assume that no statistical information for the potential buyers is available. Instead, we assume that their valuations can have any value in a known interval $[1,H]$. An ambitious goal would be to design truthful auctions that always extract a revenue that is close to (i.e., a multiplicative approximation of) the highest valuation among the agents. For the second-price auction, this is an unrealistic goal; just imagine two bidders with valuations $1$ and $H$. Even worse, this is unrealistic for any deterministic auction, as the next argument shows. To get non-zero revenue from two bidders, both having a valuation of $1$, one should get the item at a price of at most $1$. Then, the monotonicity of truthful allocation rules implies that the same agent should get the item at the same price if her valuation was $H$ instead (and the other bidder still had a valuation of $1$).

To solve this issue keeping the ambitious goal of extracting a revenue that approximates the highest valuation, we resort to randomization. Randomized auctions can use allocation rules, which, for each bid profile, define the probability that each agent will get the item. They are still bound to Myerson's monotonicity, which is not restrictive anymore but rather reveals a much richer design space. For example, randomized auctions can exploit reserve prices, which would yield no revenue if used by deterministic auctions.

Furthermore, following the practice in the literature of learning-augmented algorithms and the model of~\cite{XL22} in particular, we assume that a {\em prediction} $\widehat{u}$ of the highest valuation is available. This means that both the range $[1,H]$ of valuations and the value $\widehat{u}$ can be hardwired in the definition of the auction, which should now satisfy two different revenue guarantees, known as {\em consistency} and {\em robustness} in the literature. First, for valuation profiles for which the prediction is correct (and the highest valuation is indeed equal to $\widehat{u}$), the auction should extract a $\gamma$-approximation of the highest valuation as revenue. For the remaining inputs, in which the prediction is incorrect, a typically worse $\rho$-approximation is sought. Such an auction will be said to have consistency and robustness of $\gamma$ and $\rho$, respectively.

Alternatively, we can allow the revenue guarantee to depend on the prediction error $\eta$, denoting how far (multiplicatively) the highest valuation in a given valuation profile is from the prediction $\widehat{u}$. Then, robustness is expressed as a function $\rho$, with $\rho(\eta)$ indicating the required lower bound on the revenue-over-highest-valuation ratio extracted by the auction in all valuation profiles with prediction error $\eta$. Hence, in the terminology of the setting of the previous paragraph, the value $\rho(1)$ is essentially a consistency guarantee.

\subsection{Our contribution} 
We study a class of randomized auctions with very appealing characteristics. We refer to them as {\em intuitive auctions}. An intuitive auction is anonymous (i.e., the outcome does not depend on the identifiers of the agents in any way) and, furthermore, has the property that only the highest bidder(s) get the item with positive probability. These properties make intuitive auctions particularly simple.

For the first setting, we give an optimal trade-off between the consistency and robustness of truthful (more precisely, of dominant-strategy incentive-compatible) intuitive auctions. As a corollary, we present auctions that have constant consistency and robustness $\Omega(\ln^{-1}{H})$. For the second setting, we present a sufficient and necessary condition for intuitive auctions with a given function $\rho$ of prediction error as robustness guarantee. As corollaries, we obtain robustness guarantees so that $1/\rho(\eta)$ is polylogarithmic in $\eta$ (even for unbounded valuations) or (sub)logarithmic, with a small dependency on $H$ as well. In our proofs, we use extensively a convenient new variation of Myerson's Lemma.

\subsection{Related work} Learning-augmented algorithms have emerged as a very hot topic nowadays, with numerous related contributions in recent years. Many classical problems have been reconsidered, and new algorithms, enhanced with (possibly erroneous) machine-learned predictions about their input, are designed and analyzed with respect to the consistency and robustness guarantees they can achieve. Representative problem domains include data structures, online and approximation algorithms for combinatorial optimization, streaming and sublinear algorithms, and many more. See the early survey by~\cite{MV20} as well as the online repository \url{algorithms-with-predictions.github.io}.

In algorithmic game theory and computational social choice, the concept of prediction has been considered for problems related to mechanism design~\citep{ABGOT22,BPS23,BGT23,BGTZ23,IB22,XL22}, price of anarchy of cost sharing~\citep{GKST22}, and distortion of voting~\citep{BFGT23}. The paper by~\cite{XL22} is closest to ours. Among other problems, the authors study single-item auctions with the revenue-over-highest valuation objective and present consistency and robustness bounds for deterministic auctions. In particular, they present a truthful auction which is proved to have consistency $\gamma$ and per-instance robustness that is a function of $\gamma$, the prediction error, and the upper bound on the valuation of agents. We remark that the use of randomization in the current paper allows us to obtain considerably better consistency vs.~robustness tradeoffs compared to the results of \cite{XL22}.

Our assumptions of agents with worst-case valuations are similar in spirit to early work on competitive auctions, initiated with the work of~\cite{GHKSW06}. Even though revenue has been the main concern in that line of research, weaker benchmarks than the highest valuation among all agents have been mainly considered. See the related discussion in the very recent paper by~\cite{LWZ23}, who study competitive auctions with predictions. The more distant but also extremely important field of Bayesian mechanism design, which uses extensively statistical information about the agent valuations, is surveyed by~\cite{H13}.

\section{Preliminaries}\label{sec:prelim}
We begin with some background on auctions and mechanism design. The interested reader may find more information in standard textbooks, e.g., see \citet[Chapter 3]{R16}.
In single-item auctions, a set of $n$ {\em agents} (or {\em bidders}) compete for an item. Each agent has a private {\em valuation} $v_i$ for the item. All valuations belong to the interval $[1,H]$ for $H>1$. An {\em auction mechanism} (or, simply, auction) receives {\em bids} from the agents as input (to be thought of as reportings of their valuations) and decides the agent who will get the item (or that no agent should get the item) and the payment that will be received from each agent. Auctions are, in general, randomized. Formally, the auction consists of an {\em allocation rule} $\x=(x_1, x_2, ..., x_n)$ and a {\em payment rule} $\p=(p_1, p_2, ..., p_n)$. Both receive the {\em bid vector} $\bb=(b_1, b_2, ..., b_n)$ as input, consisting of one bid per agent. The allocation function $x_i(\bb)$ denotes the probability that agent $i$ gets the item while the payment $p_i(\bb)$ denotes the payment by agent $i$. Thus, $x_i(\bb)\in [0,1]$ and $p_i(\bb)\geq 0$. An allocation rule $\x$ is {\em feasible} if $\sum_{i\in [n]}{x_i(\bb)}\leq 1$ for every bid vector $\bb$.

Agents are (expected) utility maximizers. Agent $i$'s utility, from the outcome of an auction that uses the allocation rule $\x$ and the payment rule $\p$ when the agents submit the bid vector $\bb$, is defined as $u_i(\bb)=v_i\cdot x_i(\bb)-p_i(\bb)$. Ideally, we would like to use auctions that motivate agents to report their private valuations {\em truthfully} as bids. Truthfulness requires that each agent $i$ maximizes her utility by reporting her valuation as bid. Formally, this requirement can be written as $u_i(v_i,\bb_{-i})\geq u_i(z,\bb_{-i})$, for every bid vector $\bb_{-i}$ submitted by the agents different than $i$ and for every possible bid $z\in [1,H]$ agent $i$ may consider to submit. In addition, the property of {\em individual rationality} requires that every truthful agent $i$ has non-negative utility $u_i(v_i,\bb_{-i})$, for any possible bids $\bb_{-i}$ by other agents. An auction that is both truthful and individually rational is called {\em dominant-strategy incentive-compatible} (DSIC).

One of the most fundamental results in auction theory is the characterization of DSIC auctions by~\citet{M81}, which connects the property of truthfulness to the monotonicity of the allocation rule. An allocation rule $\x$ is {\em monotone} if, for every agent $i$ and any possible bids $\bb_{-i}$ by the other agents, the function $x_i(z,\bb_{-i})$ is non-decreasing in terms of $z$. We say that an allocation rule $\x$ is {\em implementable} if there exist a payment rule $\p$ so that the auction consisting of these two rules is DSIC.

\begin{lemma}[Myerson's Lemma]\label{lem:myerson}
The allocation rule $\x$ is implementable if and only if it is monotone. If $\x$ is monotone, it is implementable through the payment rule $\p$, which, for every agent $i$ and any bids $\bb_{-i}$ by the other agents, it holds $x_i(1,\bb_{-i})\geq p_i(1,\bb_{-i})$ and
\begin{align*}
    p_i(t,b_{-i}) - p_i(s,b_{-i})
    &= t\cdot x_i(t,b_{-i})-s\cdot x_i(s,b_{-i})-\int_s^t{x_i(z,b_{-i})\ud{z}}  
\end{align*}
for every $s,t\in [1,H]$ with $s\leq t$.
\end{lemma}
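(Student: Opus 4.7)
The plan is to prove the biconditional in two directions. Throughout, I fix an agent $i$ and bids $\bb_{-i}$ of the other agents, and abbreviate $X(z) := x_i(z,\bb_{-i})$ and $P(z) := p_i(z,\bb_{-i})$.

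For the necessity direction (implementable implies monotone), I would run the standard two-swap argument. For $s,t \in [1,H]$ with $s \leq t$, truthfulness of an agent with valuation $t$ considering a deviation to bid $s$ gives
\[
t\cdot X(t) - P(t) \geq t\cdot X(s) - P(s),
\]
while truthfulness of an agent with valuation $s$ considering a deviation to bid $t$ gives
\[
s\cdot X(s) - P(s) \geq s\cdot X(t) - P(t).
\]
Adding these two yields $(t-s)(X(t) - X(s)) \geq 0$, establishing monotonicity of $\x$. Keeping them separate yields the sandwich $s\cdot(X(t)-X(s)) \leq P(t)-P(s) \leq t\cdot(X(t)-X(s))$, which is consistent with, and suggests, the integral payment formula in the statement.

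For the sufficiency direction, I would take the stated formula as the definition of $P$, anchoring at $s=1$ with any choice of $P(1)$ satisfying $0 \leq P(1) \leq X(1)$, and verify DSIC. For truthfulness, I compute $u_i(v) - u_i(b) = v(X(v) - X(b)) - (P(v) - P(b))$ for an agent of valuation $v$ deviating to bid $b$. Substituting the payment formula and splitting on the sign of $b-v$ collapses this to
\[
u_i(v) - u_i(b) = \int_v^b \bigl(X(b) - X(z)\bigr)\ud z \geq 0 \quad \text{when } b > v,
\]
and
\[
u_i(v) - u_i(b) = \int_b^v \bigl(X(z) - X(b)\bigr)\ud z \geq 0 \quad \text{when } b < v,
\]
both estimates being immediate from the assumed monotonicity of $X$. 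For individual rationality, specializing the payment formula at $s=1$, $t=v$ gives $u_i(v) = X(1) - P(1) + \int_1^v X(z)\ud z$, which is non-negative thanks to the anchor condition $X(1) \geq P(1)$ in the statement.

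No serious obstacle arises; the only care needed is in correctly tracking signs when splitting into the two bidding cases $b>v$ and $b<v$, and in recognizing that the anchor inequality $x_i(1,\bb_{-i}) \geq p_i(1,\bb_{-i})$ written in the lemma exists precisely to secure individual rationality at the lowest possible valuation $v=1$: it is the analogue, for valuations in $[1,H]$, of the familiar condition $p_i(0,\bb_{-i})=0$ from the textbook setting in which valuations start at zero.
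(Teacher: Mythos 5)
The paper never proves this lemma: it is the classical result of \cite{M81}, quoted and then used as a black box (the paper only derives its reformulation, Lemma~\ref{lem:myerson-alt}, from it), so there is no in-paper proof to compare against. Your argument is the standard textbook one and it is correct for the statement as literally written: the two-swap inequalities give monotonicity from implementability, and defining $P$ by the stated formula anchored at $P(1)\in[0,X(1)]$ and splitting on $b>v$ versus $b<v$ gives truthfulness, while the anchor inequality yields individual rationality. One small omission: the paper's model requires $p_i(\bb)\geq 0$, so you should also record that the constructed payment is nonnegative; this is immediate since monotonicity gives $P(v)-P(1)=v\,X(v)-X(1)-\int_1^v X(z)\ud z\geq X(v)-X(1)\geq 0$ and $P(1)\geq 0$ by your choice of anchor.

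The more substantive caveat is that you prove only the existence half of the payment identity, whereas the paper later uses Lemma~\ref{lem:myerson} in the converse direction: the proof of Lemma~\ref{lem:myerson-alt} (``it suffices to show that the two formulas are equivalent'') and the ``only if'' parts of Theorems~\ref{thm:single-bidder} and~\ref{thm:pred-error} start from an \emph{arbitrary} payment rule implementing the allocation and assert that it satisfies the formula, i.e., that truthfulness forces $P(t)-P(s)=t\,X(t)-s\,X(s)-\int_s^t X(z)\ud z$ for every implementing $\p$. You remark that your sandwich $s\,(X(t)-X(s))\leq P(t)-P(s)\leq t\,(X(t)-X(s))$ ``suggests'' this, but the step is not carried out. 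It can be completed in a few lines: apply the sandwich on each cell of a partition $s=z_0<z_1<\dots<z_k=t$ and sum, so that
\begin{align*}
\sum_{j} z_{j-1}\bigl(X(z_j)-X(z_{j-1})\bigr)\;\leq\; P(t)-P(s)\;\leq\; \sum_{j} z_{j}\bigl(X(z_j)-X(z_{j-1})\bigr);
\end{align*}
the two sums differ by at most the mesh times $X(t)-X(s)$, and as the mesh vanishes both converge (by summation/integration by parts for the monotone function $X$) to $t\,X(t)-s\,X(s)-\int_s^t X(z)\ud z$, which pins down $P(t)-P(s)$. Without this squeeze argument, your proof establishes the lemma as an existence statement but not the characterization that the rest of the paper actually relies on.
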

We now give an alternative to Myerson's Lemma, which will be more convenient for our proofs.

\begin{lemma}\label{lem:myerson-alt}
A monotone allocation rule $\x$ is implementable through a payment rule $\p$ if and only if, for every agent $i$ and any bids $\bb_{-i}$ by the other agents, it holds $x_i(1,\bb_{-i})\geq p_i(1,\bb_{-i})$ and
\begin{align*}
    x_i(t,b_{-i}) &= \frac{p_i(t,b_{-i})}{t}+\int_s^t{\frac{p_i(z,b_{-i})}{z^2}\ud{z}}+x_i(s,b_{-i})-\frac{p_i(s,b_{-i})}{s}
\end{align*}    
for every $s,t\in [1,H]$ with $s\leq t$.
\end{lemma}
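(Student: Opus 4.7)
The plan is to derive the claim from Myerson's Lemma (Lemma~\ref{lem:myerson}) by a single integration by parts. Since both characterizations share the individual-rationality condition $x_i(1,b_{-i})\geq p_i(1,b_{-i})$, it suffices to show that the Myerson payment identity and the alternative identity are equivalent, given monotonicity. Fix an agent $i$ and bids $b_{-i}$ by the others, and write $x(z):=x_i(z,b_{-i})$, $p(z):=p_i(z,b_{-i})$ for brevity.

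For the forward direction, I would compute $\int_s^t \frac{p(z)}{z^2}\ud{z}$ after substituting Myerson's identity on $[s,z]$, namely $p(z)=p(s)-s\cdot x(s)+z\cdot x(z)-\int_s^z x(y)\ud{y}$. The resulting expression splits into a constant-in-$z$ piece (elementary), a term $\int_s^t \frac{x(z)}{z}\ud{z}$, and a double integral $\int_s^t z^{-2}\int_s^z x(y)\ud{y}\,\ud{z}$. Integration by parts on the double integral, with $u=\int_s^z x(y)\ud{y}$ and $\mathrm{d}v=z^{-2}\ud{z}$, produces a term $\int_s^t \frac{x(z)}{z}\ud{z}$ that cancels the middle piece, leaving only a boundary contribution $-\frac{1}{t}\int_s^t x(y)\ud{y}$. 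A final use of Myerson's identity on $[s,t]$ converts this remaining integral into closed form, and after rearrangement the alternative identity emerges.

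For the reverse direction, I would run the calculation backwards: starting from the alternative identity, substitute its expression for $x(z)$ into $\int_s^t x(z)\ud{z}$, then swap the order of integration in the resulting double integral $\int_s^t\int_s^z \frac{p(y)}{y^2}\ud{y}\,\ud{z}$ via Fubini. The terms involving $\int \frac{p(z)}{z}\ud{z}$ cancel, and one more substitution of the alternative identity to eliminate $\int_s^t \frac{p(y)}{y^2}\ud{y}$ leaves exactly Myerson's identity.

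Besides careful bookkeeping, there is no real obstacle; monotonicity of $x$ and the form of Myerson's identity make $p$ of bounded variation, so all iterated Riemann integrals are well defined and Fubini applies. The crucial step in both directions is the cancellation of the $\int \frac{x(z)}{z}\ud{z}$ (resp.\ $\int \frac{p(z)}{z}\ud{z}$) terms produced by the integration by parts (resp.\ Fubini swap), which is what makes the equivalence collapse cleanly.
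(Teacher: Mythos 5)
Your argument is correct, and it reaches the same reduction as the paper -- namely, that it suffices to show the Myerson payment identity and the alternative identity are equivalent, given the shared individual-rationality condition -- but by a different computational route. The paper introduces the auxiliary function $\sigma_s(t)=\frac{1}{t}\int_s^t x_i(z,b_{-i})\ud{z}$, observes that Myerson's identity says exactly $t^2\sigma'_s(t)=p_i(t,b_{-i})+s\cdot x_i(s,b_{-i})-p_i(s,b_{-i})$, integrates this relation, and finally differentiates $\int_s^t x_i(z,b_{-i})\ud{z}$ in $t$ to recover the alternative formula; your proof instead substitutes one identity into the relevant integral of the other and uses integration by parts (forward direction) or a Fubini swap of the iterated integral (reverse direction), with no differentiation at all. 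What your route buys is that every step is a pointwise identity obtained from integral manipulations of functions of bounded variation, so it is insensitive to discontinuities of $x$ and $p$ -- discontinuities that genuinely occur in the auctions constructed later in the paper (jumps at $t=b$ and $t=\widehat{u}$), and at which the paper's final differentiation step strictly speaking only recovers $x_i(t,b_{-i})$ at points of continuity. What the paper's route buys is brevity: the single substitution $\sigma_s$ acts as an integrating factor and collapses the computation into a few lines, whereas your version requires the sign bookkeeping you allude to (e.g., the boundary term from the integration by parts enters with a plus sign once the overall minus in front of the double integral is accounted for, which is precisely what lets the two $\int_s^t \frac{x(z)}{z}\ud{z}$ terms cancel). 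Both directions of your sketch check out when the computation is carried through.
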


Before proving Lemma~\ref{lem:myerson-alt}, we remark that the formulas connecting the allocation and payment functions in Lemmas~\ref{lem:myerson} and~\ref{lem:myerson-alt} characterize truthfulness while the inequality $x_i(1,\bb_{-i})\geq p_i(1,\bb_{-i})$ yields individual rationality. Notice that neither the particular version of Myerson's Lemma that we use here nor our Lemma~\ref{lem:myerson-alt} make any differentiability assumptions for the allocation or payment functions.

\begin{proof}
It suffices to show that the two different formulas in the statements of Lemmas~\ref{lem:myerson} and~\ref{lem:myerson-alt} are equivalent.

For valuations $s,t\in [1,H]$ with $s\leq t$, let \begin{align*}
\sigma_s(t)&=\frac{\int_s^t{x_i(z,b_{-i})\ud{z}}}{t}.
\end{align*}
By applying Myerson's Lemma, we get
\begin{align*}
    p_i(t,b_{-i})+s\cdot x_i(s,b_{-i})-p(s,b_{-i}) &= t\cdot x_i(t,b_{-i})-\int_s^t{x_i(z,b_{-i})\ud{z}}=t^2\cdot \sigma'_s(t)
\end{align*}
and, equivalently, 
\begin{align*}
    \sigma'_s(t) &= \frac{p_i(t,b_{-i})}{t^2}+\frac{s\cdot x_i(s,b_{-i})-p_i(s,b_{-i})}{t^2}.
\end{align*}
Thus,
\begin{align*}
    \sigma_s(t) &=\sigma_s(s)+\int_s^t{\frac{p_i(z,b_{-i})}{z^2}\ud{z}}+\left(\frac{1}{s}-\frac{1}{t}\right)\cdot(s\cdot x_i(s,b_{-i})-p_i(s,b_{-i})),
\end{align*}
implying, by the definition of $\sigma_s(t)$, that
\begin{align*}
\int_s^t{x_i(z,b_{-i})\ud{z}} &= t\int_s^t{\frac{p_i(z,b_{-i})}{z^2}\ud{z}}+\left(\frac{t}{s}-1\right)\cdot(s\cdot x_i(s,b_{-i})-p_i(s,b_{-i})).
\end{align*}
The lemma now follows after differentiating both sides with respect to $t$.
\end{proof}

We consider {\em anonymous} auctions, in which the allocation function $x_i(z,\bb_{-i})$ and the payment function $p_i(z,\bb_{-i})$ do not depend on $i$. Furthermore, we consider auctions which give the item with positive probability to the highest bidder(s) only, and this probability depends on the highest and second highest bid. Due to anonymity, ties are resolved uniformly at random among the tied agents. We use the term {\em intuitive} to refer to such auctions. We also simplify notation for bid vectors, and allocation and payment functions as follows. We describe a bid vector $\bb$ as the triplet $(t,b,\nu)$, where $t$ is the bid of a particular agent $i$, $b$ denotes the highest bid among the other agents, and $\nu$ denotes the number of agents different than $i$ with a bid of $b$. Thus, anonymity allows us to use $x(t,b,\nu)$ and $p(t,b,\nu)$ to refer to allocations and payments in general. Of course, for auctions with a single agent, which will be important in our study, the bid of the agent is the only parameter required to describe the bid vector; we use $x(t)$ and $p(t)$ to refer to allocation and payments in this case.

We use the term {\em revenue} to refer to the total payment received by all agents and aim to design auctions that extract a high revenue. Ideally, an auction applied on agents with a highest valuation of $t$ should extract a revenue of as close to $t$ as possible. We consider the {\em revenue-over-highest-valuation ratio} as our main objective. We assume that in addition to parameter $H$ denoting the upper bound on valuations, we are given a {\em prediction} $\widehat{u}$ for the highest valuation. Then, for parameters $\gamma$ and $\rho$, we aim to design auctions with revenue-over-highest valuation guarantees of $\gamma$ and $\rho$ when the prediction is correct and incorrect, respectively. Then, such auctions will be said to have {\em consistency} of $\gamma$ and {\em robustness} of $\rho$. Alternatively, we can allow the robustness requirement to be a function of the {\em prediction error} $\max\{t/\widehat{u},\widehat{u}/t\}$ indicating how far the highest valuation $t$ is from the prediction $\widehat{u}$. The robustness requirement can then be described by a non-increasing function $\rho:[1,H]\rightarrow [0,1]$, asking for a revenue that is at least $\rho\left(\max\{t/\widehat{u},\widehat{u}/t\}\right)\cdot t$.

\section{A consistency vs.~robustness trade-off}
We devote this section to proving the following statement, which gives sufficient and necessary conditions for the existence of consistent and robust auctions.

\begin{theorem}\label{thm:single-bidder}
    Let $0\leq \rho\leq \gamma\leq 1$. There exists a $\gamma$-consistent and $\rho$-robust DSIC intuitive auction for bidders with valuations from $[1,H]$ and a prediction for the highest bid $\widehat{u}\in [1,H]$ if and only if $\gamma+\rho\cdot \ln{\max\left\{\widehat{u},\frac{H\cdot \rho}{
    \gamma}\right\}}\leq 1$.
\end{theorem}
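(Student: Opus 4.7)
The plan is to reduce the statement to an analysis of a single-variable pair $(x(t),p(t))$ on $t\in[1,H]$, which captures the single-bidder instance as well as the ``marginal'' behavior of any bidder in the multi-bidder case: by anonymity and the intuitive structure, each bidder faces a schedule that depends only on whether her bid is the top one. Consistency and robustness translate to $p(\widehat{u})\geq\gamma\widehat{u}$ and $p(t)\geq\rho t$ for every $t\in[1,H]\setminus\{\widehat{u}\}$, with $(x,p)$ implementable in the sense of Lemma~\ref{lem:myerson-alt}, and feasibility $x(t)\leq 1$.

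For the necessity direction, I would first establish the auxiliary fact that a monotone allocation $x$ induces a monotone payment $p$: rewriting the Myerson identity of Lemma~\ref{lem:myerson} as
\[
p(t)-p(s)=(t-s)\,x(t)+s\bigl(x(t)-x(s)\bigr)-\int_s^t x(z)\ud z,
\]
and bounding $\int_s^t x(z)\ud z \leq (t-s)\,x(t)$ gives $p(t)-p(s)\geq s(x(t)-x(s))\geq 0$. Together with $p(\widehat{u})\geq\gamma\widehat{u}$, this yields $p(z)\geq\max\{\gamma\widehat{u},\rho z\}$ for every $z\in[\widehat{u},H]$, while $p(z)\geq\rho z$ on $[1,\widehat{u})$. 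Applying Lemma~\ref{lem:myerson-alt} with $s=1$, $t=H$ and dropping the non-negative term $x(1)-p(1)$ gives
\[
x(H)\geq \frac{p(H)}{H}+\int_1^H \frac{p(z)}{z^2}\ud z.
\]
Splitting the integral at $\widehat{u}$ and, whenever $\gamma\widehat{u}/\rho<H$, also at $\gamma\widehat{u}/\rho$, and substituting the piecewise lower bounds on $p$, the terms collapse to $\gamma+\rho\ln\max\{\widehat{u},H\rho/\gamma\}$, so feasibility $x(H)\leq 1$ yields the claimed inequality.

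For sufficiency, I would reverse-engineer the extremal auction directly from the tight case of the above analysis. I would define
\[
p(t)=\begin{cases}\rho t, & t\in[1,\widehat{u}),\\ \gamma\widehat{u}, & t\in[\widehat{u},\min\{H,\gamma\widehat{u}/\rho\}],\\ \rho t, & t\in(\gamma\widehat{u}/\rho,H],\end{cases}
\]
with the last piece present only when $\gamma\widehat{u}/\rho<H$. Starting from $x(1)=p(1)=\rho$ and applying Lemma~\ref{lem:myerson-alt} piece by piece, one obtains $x(t)=\rho(1+\ln t)$, then the constant $\gamma+\rho\ln\widehat{u}$, and finally $\gamma+\rho\ln(t\rho/\gamma)$, with the values agreeing at the breakpoints. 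Monotonicity of $x$ follows from $\gamma\geq\rho$, while $x(H)\leq 1$ is exactly the hypothesis of the theorem; consistency and robustness hold by construction. The extension to $n$ bidders uses $x(t)$ as the total allocation probability when the top bid equals $t$, split uniformly among tied top bidders and with payments derived from Myerson's formula, so that each bidder's marginal schedule coincides with the single-bidder auction above.

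The main obstacle I expect is the bookkeeping across the two regimes $\widehat{u}\geq H\rho/\gamma$ and $\widehat{u}<H\rho/\gamma$, where the constant-payment plateau either extends all the way to $H$ or gives way to a final $\rho t$-matching segment. A subtler point in the necessity argument is that, without the monotonicity of $p$ derived above, the robustness constraint alone produces only the weaker pair $\gamma+\rho\ln\widehat{u}\leq 1$ and $\rho(1+\ln H)\leq 1$, which misses the tight bound $\gamma+\rho\ln(H\rho/\gamma)\leq 1$ obtained in the second regime; exploiting that $p$ must stay at least $\gamma\widehat{u}$ throughout the plateau $[\widehat{u},\gamma\widehat{u}/\rho]$ is what closes the gap.
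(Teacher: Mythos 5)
Your proposal is correct and takes essentially the paper's route: the same extremal payment schedule ($\rho t$, then a plateau at $\gamma\widehat{u}$ on $[\widehat{u},\min\{H,\gamma\widehat{u}/\rho\}]$, then $\rho t$) with the allocation recovered via Lemma~\ref{lem:myerson-alt} for the ``if'' part, and the same Lemma~\ref{lem:myerson-alt} accounting for the ``only if'' part, split at $\widehat{u}$ and $\gamma\widehat{u}/\rho$ and using payment monotonicity (which you prove explicitly and the paper merely invokes). The one imprecision is the last clause of your $n$-bidder extension: a bidder facing second-highest bid $b$ does \emph{not} inherit the single-bidder schedule, since Myerson forces her winning payment to be $t\,x(t)-\int_b^t x(z)\ud{z}=p(t)+\int_1^b x(z)\ud{z}$ rather than $p(t)$; this only increases revenue, so your guarantees still hold, whereas the paper sidesteps the issue by defining a $b$-dependent allocation directly.
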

The proof of Theorem~\ref{thm:single-bidder} is given in Sections~\ref{subsec:proof-if} and~\ref{subsec:proof-only-if} below. As a corollary of Theorem~\ref{thm:single-bidder}, we obtain that auctions with constant consistency and robustness $\Omega(\ln^{-1}{H})$ do exist. E.g., notice that the parameters $\gamma=1/2$ and $\rho=\frac{1}{2(1+\ln{H})}$ satisfy the condition of Theorem~\ref{thm:single-bidder}. Even though Theorem~\ref{thm:single-bidder} just claims the existence of the desired auction, its proof is constructive. Once we have parameters $\gamma$ and $\rho$ satisfying the condition in the theorem, the definition of a corresponding auction is given explicitly in Section~\ref{subsec:proof-if} below.

\subsection{Proving the ``if'' part of Theorem~\ref{thm:single-bidder}}\label{subsec:proof-if}

We prove the ``if'' part of Theorem~\ref{thm:single-bidder} by constructing a DSIC intuitive auction which uses the allocation function $\bar{x}$ and the payment function $\bar{p}$ defined below. Consider an agent $i$ with valuation $t$. Let $b$ be the highest bid among the remaining agents, and let $\nu$ denote the number of bidders different than $i$ with valuation $b$.

The allocation fraction $\bar{x}(t,b,\nu)$ is defined as
\begin{align*}
    \bar{x}(t,b,\nu) &= \begin{cases} 0, &t\in [1,b)\\
    \frac{\rho}{\nu+1}, &t=b\\
    \rho+\rho\cdot \ln\frac{t}{b}, &t\in (b,H]
    \end{cases}
\end{align*}
if $1\leq \widehat{u}<b\leq H$, as
\begin{align*}
\bar{x}(t,b,\nu) &= \begin{cases} 0, &t\in [1,\widehat{u})\\
\frac{\gamma}{\nu+1}, &t=\widehat{u}\\
\gamma, &t\in \left(\widehat{u},\min\left\{\frac{\gamma\cdot \widehat{u}}{\rho}, H\right\}\right]\\
\gamma+\rho\cdot \ln{\frac{t\cdot \rho}{b\cdot \gamma}}, &t\in \left(\min\left\{\frac{\gamma\cdot \widehat{u}}{\rho}, H\right\},H\right]
\end{cases}
\end{align*}
if $1\leq \widehat{u}=b\leq H$, and as 
\begin{align*}
\bar{x}(t,b,\nu) &= \begin{cases} 0, &t\in [1,b)\\
\frac{\rho}{\nu+1}, &t=b\\
\rho+\rho\cdot \ln\frac{t}{b}, &t\in (b,\widehat{u})\\
\gamma+\rho\cdot \ln{\frac{\widehat{u}}{b}}, &t\in \left[\widehat{u},\min\left\{\frac{\gamma\cdot \widehat{u}}{\rho}, H\right\}\right]\\
\gamma+\rho\cdot \ln{\frac{t\cdot \rho}{b\cdot \gamma}}, &t\in \left(\min\left\{\frac{\gamma\cdot \widehat{u}}{\rho}, H\right\},H\right]
\end{cases}
\end{align*}
if $1\leq b<\widehat{u}\leq H$. See Figure~\ref{fig:allocation}.

\begin{figure}[h]
    \centering
    \includegraphics[scale=0.6]{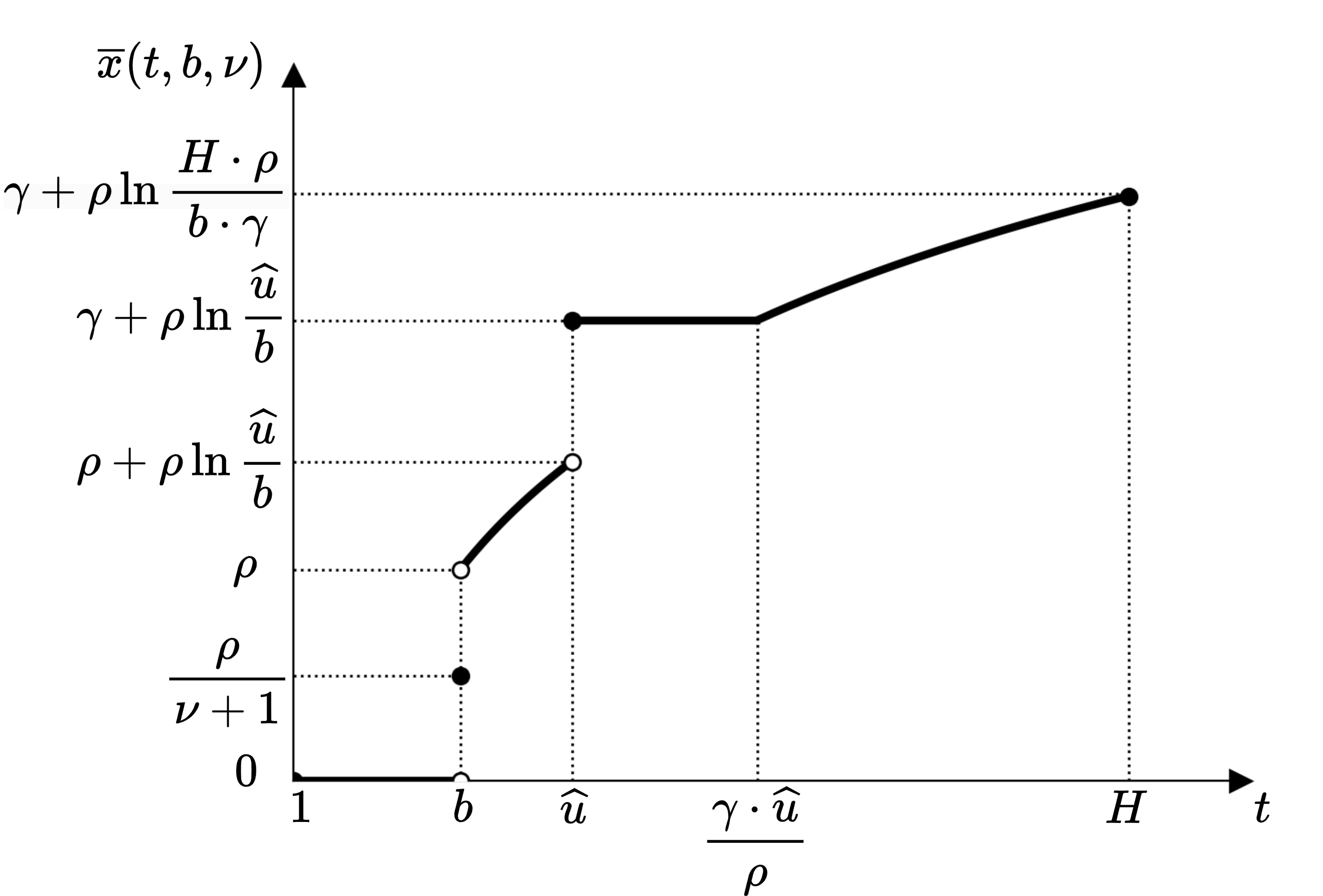}
    \caption{The most general form of the allocation function $\bar{x}$ for an agent in terms of her valuation $t$, assuming $1<b<\widehat{u}<\frac{\gamma\cdot \widehat{u}}{\rho}<H$ and $\nu=1$. Notice that $\bar{x}(t,b,\nu)$ consists of five parts: the leftmost part in which the item is not allocated to the agent, the point corresponding to a tie for the highest bid, and three more parts in which the allocation function has logarithmic, constant, and again logarithmic form. The remaining cases for the  relative values of $b$, $\widehat{u}$, $\frac{\gamma\cdot \widehat{u}}{\rho}$ and $H$ do not include some of the three rightmost parts. The black and white dots are used at points in which the allocation function ``jumps''; the black dot represents the allocation value at these points.}
    \label{fig:allocation}
\end{figure}

Notice that, in all cases, $\bar{x}$ is non-negative and non-decreasing in $t$. Whenever agent $i$ is tied as highest bidder with a value of $t=b$  (together with $\nu$ more agents), the total fraction allocated is $\gamma$ when $b=\widehat{u}$ and $\rho$ otherwise, i.e., at most $1$. When agent $i$ is the unique highest bidder, she is the only agent who gets a positive fraction of the item, which is maximized to either $\rho+\rho\cdot \ln\frac{H}{b}$, or $\gamma+\rho\cdot \ln\frac{\widehat{u}}{b}$, or $\gamma+\rho\cdot \ln\frac{H\cdot \rho}{b\cdot \gamma}$. We have
\begin{align*}
    \rho+\rho\cdot \ln\frac{H}{b} &\leq \rho+\rho\cdot \ln{H} \leq \rho+\rho\cdot \ln{H}+\rho\cdot \left(\frac{\gamma}{\rho}-1-\ln\frac{\gamma}{\rho}\right)\leq \gamma+\rho\cdot \ln{\max\left\{\widehat{u},\frac{H\cdot \rho}{
    \gamma}\right\}} \leq 1.
\end{align*}
The first two inequalities follow since $b\geq 1$ and using the inequality $z-1-\ln{z}\geq 0$ for $z>0$. Furthermore, 
\begin{align*}
\max\left\{\gamma+\rho\cdot \ln\frac{\widehat{u}}{b},\gamma+\rho\cdot \ln\frac{H\cdot \rho}{b\cdot \gamma}\right\}&\leq \gamma+\rho\cdot \ln{\max\left\{\widehat{u},\frac{H\cdot \rho}{
    \gamma}\right\}}\leq 1,
\end{align*}
by the assumption of Theorem~\ref{thm:single-bidder}. Thus, the allocation function $\bar{x}$ is feasible and monotone. 

The corresponding payment function $\bar{p}$ is defined as
\begin{align*}
    \bar{p}(t,b,\nu) &= \begin{cases} 0, &t\in [1,b)\\
    \frac{\rho\cdot b}{\nu+1}, &t=b\\
    \rho\cdot t, &t\in (b,H]
    \end{cases}
\end{align*}
if $1\leq \widehat{u}<b\leq H$, as
\begin{align*}
\bar{p}(t,b,\nu) &= \begin{cases} 0, &t\in [1,\widehat{u})\\
\frac{\gamma\cdot \widehat{u}}{\nu+1}, &t=\widehat{u}\\
\gamma \cdot \widehat{u}, &t\in \left(\widehat{u},\min\left\{\frac{\gamma\cdot \widehat{u}}{\rho}, H\right\}\right]\\
\rho\cdot t, &t\in \left(\min\left\{\frac{\gamma\cdot \widehat{u}}{\rho}, H\right\},H\right]
\end{cases}
\end{align*}
if $1\leq \widehat{u}=b\leq H$, and as 
\begin{align*}
\bar{p}(t,b,\nu) &= \begin{cases} 0, &t\in [1,b)\\
\frac{\rho\cdot b}{\nu+1}, &t=b\\
\rho\cdot b, &t\in (b,\widehat{u})\\
\gamma\cdot \widehat{u}, &t\in \left[\widehat{u},\min\left\{\frac{\gamma\cdot \widehat{u}}{\rho}, H\right\}\right]\\
\rho\cdot t, &t\in \left(\min\left\{\frac{\gamma\cdot \widehat{u}}{\rho}, H\right\},H\right]
\end{cases}
\end{align*}
if $1\leq b<\widehat{u}\leq H$.

It is straightforward to verify that, indeed, the payment function $\bar{p}$ implements the allocation function $\bar{x}$, i.e., that $\bar{x}$ and $\bar{p}$ are consistent with Lemma~\ref{lem:myerson-alt}. Furthermore, we can verify that the auction's revenue yields the required consistency and robustness. When the prediction is correct, and the highest bid is equal to $\widehat{u}$, the revenue is exactly $\gamma\cdot \widehat{u}$, which implies a consistency of $\gamma$. When the highest bid $t$ is different than $\widehat{u}$, the revenue is either equal to $\rho\cdot t$ or equal to $\gamma\cdot \widehat{u}$ for $t\in \left(\widehat{u},\min\left\{\frac{\gamma\cdot \widehat{u}}{\rho}, H\right\}\right]$, i.e., for $t\leq \frac{\gamma\cdot \widehat{u}}{\rho}$. Notice that $\gamma\cdot \widehat{u}\geq \rho\cdot t$ in this case, implying the bound of $\rho$ for robustness. The completes the proof of the ``if'' part of Theorem~\ref{thm:single-bidder}.

\subsection{Proving the ``only if'' part of Theorem~\ref{thm:single-bidder}}\label{subsec:proof-only-if}

We will now prove the ``only if'' part of Theorem~\ref{thm:single-bidder} by considering the case of a single bidder of valuation $t$. We will use the simplest univariate form of allocation and payment functions. For the sake of contradiction, let us assume that 
\begin{align}\label{eq:ass1}
\gamma+\rho\cdot \ln{\max\left\{\widehat{u},\frac{H\cdot \rho}{\gamma}\right\}}&>1
\end{align}
and, furthermore, that there is a feasible allocation rule $x$ which is implementable through a payment function $p$ that satisfies $p(t)\geq \rho\cdot t$ for $t\in [1,H]$ and $p(\widehat{u})\geq \gamma\cdot \widehat{u}$. I.e., the auction defined by the pair $(x,p)$ is $\gamma$-consistent and $\rho$-robust.

By the robustness requirement, we have $p(t)\geq \rho\cdot t$ for $t\in [1,\widehat{u}]$ and, hence,
\begin{align*}
    \int_1^{\widehat{u}}{\frac{p(z)}{z^2}\ud{z}}\geq \rho \int_1^{\widehat{u}}{\frac{\ud{z}}{z}}=\rho\cdot \ln{\widehat{u}}.
\end{align*}
By applying Lemma~\ref{lem:myerson-alt}, we have the individual rationality condition $x(1)-p(1)\geq 0$ and, furthermore, for $s=1$ and $t=\widehat{u}$,
\begin{align}\label{eq:x_u}
x(\widehat{u}) &= \frac{p(\widehat{u})}{\widehat{u}}+\int_1^{\widehat{u}}{\frac{p(z)}{z^2}\ud{z}}+x(1)-p(1) \geq \frac{p(\widehat{u})}{\widehat{u}}+\rho\cdot \ln{\widehat{u}}.
\end{align}

We now distinguish between two cases. First, if $\widehat{u}\geq \frac{H\cdot\rho}{\gamma}$, inequality (\ref{eq:x_u}) yields (using the consistency requirement $p(\widehat{u})\geq \gamma\cdot \widehat{u}$) 
\begin{align*}
    x(\widehat{u}) &\geq \gamma+\rho\cdot \ln{\widehat{u}} = \gamma+\rho\cdot \ln{\max\left\{\widehat{u},\frac{H\cdot \rho}{\gamma}\right\}}>1.
\end{align*}
The strict inequality follows by assumption (\ref{eq:ass1}) and contradicts the feasibility of the allocation function $x$.

In the following, we consider the second case, where $\widehat{u}< \frac{H\cdot \rho}{\gamma}$. Define $v=\frac{\gamma\cdot \widehat{u}}{\rho}$ and observe that $v\in [\widehat{u},H)$. 

By the consistency requirement and payment monotonicity, we have $p(t)\geq \gamma\cdot \widehat{u}$ for $t\in [\widehat{u},v)$ and, hence,
\begin{align*}
\int_{\widehat{u}}^v{\frac{p(z)}{z^2}\ud{z}} &\geq \gamma\cdot \widehat{u}\int_{\widehat{u}}^v{\frac{\ud{z}}{z^2}}=\gamma-\frac{\gamma\cdot \widehat{u}}{v}=\gamma-\rho.
\end{align*}
Thus, by applying Lemma~\ref{lem:myerson-alt} for $s=\widehat{u}$ and $t=v$, we get
\begin{align}\label{eq:x_v}
x(v) &= \frac{p(v)}{v}+\int_{\widehat{u}}^{v}{\frac{p(z)}{z^2}\ud{z}}+x(\widehat{u})-\frac{p(\widehat{u})}{\widehat{u}}\geq \frac{p(v)}{v}+\gamma-\rho+x(\widehat{u})-\frac{p(\widehat{u})}{\widehat{u}}.
\end{align}

Also, by the robustness requirement, we have $p(t)\geq \rho\cdot t$ for $t\in [v,H]$ and, hence (using the definition of $v$),
\begin{align*}
    \int_v^H{\frac{p(z)}{z^2}\ud{z}} &\geq \rho\int_v^H{\frac{\ud{z}}{z}}=\rho\cdot \ln{\frac{H}{v}}=\rho\cdot \ln{\frac{H\cdot \rho}{\widehat{u}\cdot \gamma}}.
\end{align*}
Thus, by applying Lemma~\ref{lem:myerson-alt} for $s=v$ and $t=H$ and using the robustness requirement $p(H)\geq \rho\cdot H$, we get
\begin{align}\label{eq:x_H}
x(H) &= \frac{p(H)}{H}+\int_v^H{\frac{p(z)}{z^2}\ud{z}}+x(v)-\frac{p(v)}{v}
\geq \rho+\rho\cdot \ln{\frac{H\cdot \rho}{\widehat{u}\cdot \gamma}}+x(v)-\frac{p(v)}{v}.
\end{align}

Now, by summing equations (\ref{eq:x_u}), (\ref{eq:x_v}), and (\ref{eq:x_H}), and using the assumption $\widehat{u}<\frac{H\cdot \rho}{\gamma}$ in the second case and assumption (\ref{eq:ass1}), we obtain
\begin{align*}
    x(H) &\geq \gamma+\ln{\frac{H\cdot \rho}{\gamma}}=\gamma+\ln{\max\left\{\widehat{u},\frac{H\cdot \rho}{\gamma}\right\}}>1,
\end{align*}
again contradicting the feasibility of the allocation function. This completes the proof of the ``only if'' part of Theorem~\ref{thm:single-bidder}.

\section{Robustness as a function of prediction error}\label{sec:pred}
In this section, we give the formal statement and proof of our second result. 
\begin{theorem}\label{thm:pred-error}
    Let $\rho:[1,H]\rightarrow [0,1]$ be a differentiable function so that $\rho(\eta)$ is non-increasing and $\eta\cdot \rho(\eta)$ is non-decreasing in $\eta$. There exists a $\rho$-robust DSIC intuitive auction for bidders with valuations from $[1,H]$ and a prediction for the highest bid $\widehat{u}\in [1,H]$ if and only if 
    $$\rho(H/\widehat{u})+\int_1^{\widehat{u}}{\frac{\rho(z)}{z} \ud{z}}+\int_1^{H/\widehat{u}}{\frac{\rho(z)}{z} \ud{z}}\leq 1.$$
\end{theorem}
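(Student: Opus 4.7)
The plan mirrors the structure of the proof of Theorem~\ref{thm:single-bidder}: the ``only if'' direction is a single-bidder impossibility argument, and the ``if'' direction constructs an explicit intuitive auction. Writing $\eta(t)=\max\{t/\widehat{u},\widehat{u}/t\}$, both directions are driven by the natural choice of setting the winner's payment equal to exactly what robustness demands, namely $p(t)=\rho(\eta(t))\cdot t$.

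For the ``only if'' direction, I would restrict to a single bidder with valuation $t$. The robustness requirement forces $p(z)\geq \rho(\eta(z))\cdot z$ on $[1,H]$, individual rationality gives $x(1)-p(1)\geq 0$, and Lemma~\ref{lem:myerson-alt} applied with $s=1$ and endpoint $H$ yields $x(H)\geq \rho(H/\widehat{u})+\int_1^H \rho(\eta(z))/z\,\ud z$. Splitting the integral at $\widehat{u}$ and applying the substitutions $w=\widehat{u}/z$ on $[1,\widehat{u}]$ and $w=z/\widehat{u}$ on $[\widehat{u},H]$ converts the right-hand side into $\rho(H/\widehat{u})+\int_1^{\widehat{u}}\rho(w)/w\,\ud w+\int_1^{H/\widehat{u}}\rho(w)/w\,\ud w$. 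Combined with feasibility $x(H)\leq 1$, this gives the stated condition.

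For the ``if'' direction, I would define the multi-bidder intuitive auction as follows: $x=p=0$ when $t<b$; $x=\rho(\eta(b))/(\nu+1)$ and $p=\rho(\eta(b))\cdot b/(\nu+1)$ at a tie $t=b$; and for a unique highest bidder $t>b$, $p(t,b,\nu)=\rho(\eta(t))\cdot t$ with $x(t,b,\nu)=\rho(\eta(t))+\int_b^t \rho(\eta(z))/z\,\ud z$, as dictated by Lemma~\ref{lem:myerson-alt} taking $s=1$ (so that individual rationality is tight and $x(1)=p(1)=0$). DSIC then holds by construction, and robustness is immediate because the winner pays $\rho(\eta(t))\cdot t$. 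The allocation $x(t,b,\nu)$ is non-increasing in $b$ and non-decreasing in $t$, so the worst case for feasibility is $b=1,\,t=H$, which via the same substitution used in the ``only if'' part reduces exactly to the theorem's inequality.

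The main obstacle will be the monotonicity of $p$, since $\eta(t)$ \emph{decreases} on $[1,\widehat{u}]$ and \emph{increases} on $[\widehat{u},H]$, forcing the two hypotheses on $\rho$ to be invoked in different regimes. Writing $p(t)=\widehat{u}\cdot \rho(\eta)/\eta$ for $t\leq \widehat{u}$ shows that $p$ is non-decreasing using that $\rho$ is non-increasing (and $1/\eta$ is decreasing), while writing $p(t)=\widehat{u}\cdot \eta\rho(\eta)$ for $t\geq \widehat{u}$ shows it using that $\eta\rho(\eta)$ is non-decreasing; continuity at $t=\widehat{u}$ is automatic since both expressions equal $\widehat{u}\rho(1)$ there. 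Once $p$ is known to be monotone, monotonicity of $x$ follows from the Myerson identity via the bound $\int_s^t p(z)/z^2\,\ud z\geq p(s)(1/s-1/t)$, which gives $x(t)-x(s)\geq (p(t)-p(s))/t\geq 0$.
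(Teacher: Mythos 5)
Your proposal is correct and is essentially the paper's own argument: the ``only if'' part is the same single-bidder application of Lemma~\ref{lem:myerson-alt} (the paper splits it at $\widehat{u}$ into two applications, you do one and split the integral), and the ``if'' part builds exactly the same intuitive auction, with winner payment $\rho(\eta(t))\cdot t$ and allocation given by the Myerson identity, with feasibility checked at $b=1$, $t=H$. The only (minor) divergence is your monotonicity step --- deducing monotonicity of $x$ from monotonicity of $p$ via the bound $\int_s^t p(z)/z^2\,\ud z\geq p(s)(1/s-1/t)$, whereas the paper differentiates the allocation and uses the monotonicity of $\eta\rho(\eta)$ and $\rho(\eta)/\eta$ directly --- and both are valid.
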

The proof follows in Sections~\ref{subsec:proof-pred-if} and~\ref{subsec:proof-pred-only-if}. Then, we present specific robustness guarantees (i.e., functions of prediction error) as applications in Section~\ref{subsec:pred-appl}.

\subsection{Proving the ``if'' part of Theorem~\ref{thm:pred-error}}\label{subsec:proof-pred-if}
We prove the ``if'' part of Theorem~\ref{thm:pred-error} by constructing a DSIC intuitive auction, which uses the allocation function $\widetilde{x}$ and the payment function $\widetilde{p}$ defined below. Of course, the definition of both $\widetilde{x}$ and $\widetilde{p}$ uses the function $\rho$. 

Again, consider an agent $i$ with valuation $t$. Let $b$ be the highest bid among the remaining agents, and let $\nu$ denote the number of bidders different than $i$ with valuation $b$. The allocation fraction $\widetilde{x}(t,b,\nu)$ is defined as
\begin{align*}
    \widetilde{x}(t,b,\nu) &= \begin{cases} 0, &t\in [1,b)\\
    \frac{\rho(b/\widehat{u})}{\nu+1}, &t=b\\ \rho(t/\widehat{u})+\int_{b/\widehat{u}}^{t/\widehat{u}}{\frac{\rho(z)}{z}\ud{z}}, &t\in (b,H]
    \end{cases}
\end{align*}
if $1\leq \widehat{u}\leq b\leq H$, and as
\begin{align*}
\widetilde{x}(t,b,\nu) &= \begin{cases} 0, &t\in [1,b)\\
\frac{\rho(\widehat{u}/b)}{\nu+1}, &t=b\\
\rho(\widehat{u}/t)+\int_{\widehat{u}/t}^{\widehat{u}/b}{\frac{\rho(z)}{z}\ud{z}}, &t\in (b,\widehat{u})\\
\rho(t/\widehat{u})+\int_{1}^{\widehat{u}/b}{\frac{\rho(z)}{z}\ud{z}}\\
\quad +\int_{1}^{t/\widehat{u}}{\frac{\rho(z)}{z}\ud{z}}, &t\in [\widehat{u},H]\end{cases}
\end{align*}
if $1\leq b<\widehat{u}\leq H$.

The quantity $\widetilde{x}(t,b,\nu)$ is clearly non-negative. We will show that it is also non-decreasing. Let $\sigma_1(\eta)=\eta\cdot \rho(\eta)$ and $\sigma_2(\eta)=\rho(\eta)/\eta$. By the definition of the allocation function for $t$ such that $1\leq \widehat{u}\leq b<t\leq H$ and $1\leq b<\widehat{u}\leq t\leq H$, its derivative with respect to $t$ is
\begin{align}\nonumber
\widetilde{x}'(t,b,\nu) &= \frac{1}{t}\cdot\left(\frac{t}{\widehat{u}}\cdot \rho'(t/\widehat{u})+\rho(t/\widehat{u})\right)\\\label{eq:sigma_1}
&= \frac{1}{t}\cdot \sigma'_1(t/\widehat{u}).
\end{align}
Also, for $t$ such that $1\leq b<t<\widehat{u}\leq H$, the derivative of the allocation function with respect to $t$ is
\begin{align}\nonumber
\widetilde{x}'(t,b,\nu) &= -\frac{1}{t}\cdot\left(\frac{\widehat{u}}{t}\cdot \rho'(\widehat{u}/t)-\rho(\widehat{u}/t)\right)\\\label{eq:sigma_2}
&= -\frac{u^2}{t^3}\cdot \sigma'_2(\widehat{u}/t).
\end{align}
Furthermore, notice that, in any case, function $\widetilde{x}(t,b,\nu)$ is continuous at $t=\widehat{u}$. Thus, equations (\ref{eq:sigma_1}) and (\ref{eq:sigma_2}) imply that $\widetilde{x}(t,b,\nu)$ is indeed non-decreasing (recall that $\sigma_1(\eta)$ is non-decreasing and $\sigma_2(\eta)$ is non-increasing in $\eta$ and their derivatives are non-negative and non-positive, respectively). To show feasibility, it suffices to show that the highest fraction of the item that is allocated does not exceed $1$. Indeed, if $1\leq \widehat{u}\leq b\leq H$, by the definition of the allocation function and the condition of Theorem~\ref{thm:pred-error}, we have 
\begin{align*}
\widetilde{x}(H,b,\nu) &= \rho(H/\widehat{u})+\int_{b/\widehat{u}}^{H/\widehat{u}}{\frac{\rho(z)}{z}\ud{z}}\leq 
\rho(H/\widehat{u})+\int_{1}^{H/\widehat{u}}{\frac{\rho(z)}{z}\ud{z}} \leq 1,
\end{align*}    
while, if $1\leq b<\widehat{u}\leq H$, we have 
\begin{align*}
\widetilde{x}(H,b,\nu) &= \rho(H/\widehat{u})+\int_{1}^{\widehat{u}/b}{\frac{\rho(z)}{z}\ud{z}}+\int_{1}^{H/\widehat{u}}{\frac{\rho(z)}{z}\ud{z}}\leq 
\rho(H/\widehat{u})+\int_{1}^{\widehat{u}}{\frac{\rho(z)}{z}\ud{z}} \int_{1}^{H/\widehat{u}}{\frac{\rho(z)}{z}\ud{z}} \leq 1,
\end{align*}    
as desired. Thus, the allocation function $\widetilde{x}$ is feasible and monotone.

The corresponding payment function $\widetilde{p}$ is defined as
\begin{align*}
    \widetilde{p}(t,b,\nu) &= \begin{cases} 0, &t\in [1,b)\\
    \frac{\rho(b/\widehat{u})\cdot b}{\nu+1}, &t=b\\
    \rho(t/\widehat{u})\cdot t, &t\in (b,H]
    \end{cases}
\end{align*}
if $1\leq \widehat{u}\leq b\leq H$, and as
\begin{align*}
\widetilde{p}(t,b,\nu) &= \begin{cases} 0, &t\in [1,b)\\
\frac{\rho(\widehat{u}/b)\cdot b}{\nu+1}, &t=b\\
\rho(\widehat{u}/t)\cdot t, &t\in (b,\widehat{u})\\
\rho(t/\widehat{u})\cdot t, &t\in [\widehat{u},H]
\end{cases}
\end{align*}
if $1\leq b<\widehat{u}\leq H$. 

We will show that $\widetilde{p}$ indeed implements the allocation rule $\widetilde{x}$. i.e., $\widetilde{x}$ and $\widetilde{p}$ are consistent with Lemma~\ref{lem:myerson-alt}. For $t$ such that $1\leq \widehat{u}\leq b\leq t\leq H$, the RHS of the expression in Lemma~\ref{lem:myerson-alt} for $t$ and $s=b$ becomes
\begin{align*}
\frac{p(t)}{t}+\int_b^t{\frac{p(z)}{z^2}\ud{z}}+x(b)-\frac{p(b)}{b} &=\rho(t/\widehat{u})+\int_b^t{\frac{\rho(z/\widehat{u})}{z}\ud{z}}=\rho(t/\widehat{u})+\int_{t/\widehat{u}}^{b/\widehat{u}}{\frac{\rho(y)}{y}\ud{y}}=x(t),
\end{align*}
as desired. The second equality follows using the substitution $y=z/\widehat{u}$. For $t$ such that $1\leq b< t< \widehat{u}\leq H$, the RHS of the expression in Lemma~\ref{lem:myerson-alt} for $t$ and $s=b$ becomes
\begin{align*}
\rho(\widehat{u}/t)+\int_b^t{\frac{\rho(\widehat{u}/z)}{z}\ud{z}}=\rho(\widehat{u}/t)+\int_{\widehat{u}/t}^{\widehat{u}/b}{\frac{\rho(y)}{y}\ud{y}}=x(t),
\end{align*}
as desired. The first equality follows using the substitution $y=\widehat{u}/z$. Finally, for $t$ such that $1\leq b< \widehat{u}\leq t\leq H$, the RHS of the expression in Lemma~\ref{lem:myerson-alt} for $t$ and $s=\widehat{u}$ becomes
\begin{align*}
\frac{p(t)}{t}+\int_{\widehat{u}}^t{\frac{p(z)}{z^2}\ud{z}}+x(\widehat{u})-\frac{p(\widehat{u})}{\widehat{u}}
&=\rho(t/\widehat{u})+\int_{\widehat{u}}^t{\frac{\rho(z/\widehat{u})}{z}\ud{z}}+\int_1^{\widehat{u}/b}{\frac{\rho(z)}{z}\ud{z}}\\
&=\rho(t/\widehat{u})+\int_1^{t/\widehat{u}}{\frac{\rho(y)}{y}}+\int_1^{\widehat{u}/b}{\frac{\rho(z)}{z}\ud{z}}=x(t),
\end{align*}
as desired, again. The second equality follows using the substitution $y=z/\widehat{u}$.

Notice that, clearly, the auction's revenue yields the required robustness. In particular, assuming, without loss of generality, that agent $i$ has the highest valuation (i.e., $t\geq b$), the revenue is exactly $\rho(\widetilde{u}/t)\cdot t$ for $t\in [b,\widehat{u})$ and exactly $\rho(t/\widetilde{u})\cdot t$ for $t\in [\widehat{u},H]$. The proof of the ``if'' part of Theorem~\ref{thm:pred-error} is now complete.

\subsection{Proving the ``only if'' part of Theorem~\ref{thm:pred-error}}\label{subsec:proof-pred-only-if}

Like in the corresponding proof for Theorem~\ref{thm:single-bidder}, we will prove the ``only if'' part of Theorem~\ref{thm:pred-error} by considering the case of a single bidder. For the sake of contradiction, let us assume that 
\begin{align}\label{eq:ass2}
\rho(H/\widehat{u})+\int_1^{\widehat{u}}{\frac{\rho(z)}{z} \ud{z}}+\int_1^{H/\widehat{u}}{\frac{\rho(z)}{z} \ud{z}}& > 1
\end{align}
and, furthermore, that there exists a feasible and monotone allocation function $x$ which is implementable through a payment function $p$ that satisfies $p(t)\geq \rho(\widehat{u}/t)\cdot t$ for $t\in [1,\widehat{u}]$ and $p(t)\geq \rho(t/\widehat{u})\cdot t$ for $t\in [\widehat{u},H]$.

By applying Lemma~\ref{lem:myerson-alt}, we have the individual rationality condition $x(1)-p(1)\geq 0$ and, for $s=1$ and $t=\widehat{u}$, 
\begin{align}\label{eq:x_u-pred-err}
    x(\widehat{u}) &= \frac{p(\widehat{u})}{\widehat{u}}+\int_1^{\widehat{u}}{\frac{p(z)}{z^2}\ud{z}}+x(1)-p(1) \geq \frac{p(\widehat{u})}{\widehat{u}}+\int_1^{\widehat{u}}{\frac{\rho(\widehat{u}/z)}{z}\ud{z}}.
\end{align}
By applying Lemma~\ref{lem:myerson-alt} for $s=\widehat{u}$ and $t=H$ and using our assumptions for the payment function $p$, 
we get
\begin{align}\label{eq:x_H-pred-err}
    x(H) &= \frac{p(H)}{H}+\int_{\widehat{u}}^H{\frac{p(z)}{z^2}\ud{z}}+x(\widehat{u})-\frac{p(\widehat{u})}{\widehat{u}}\geq \rho(H/\widehat{u})+\int_{\widehat{u}}^H{\frac{\rho(z/\widehat{u})}{z}\ud{z}}+x(\widehat{u})-\frac{p(\widehat{u})}{\widehat{u}}.
\end{align}
By inequalities (\ref{eq:x_u-pred-err}) and (\ref{eq:x_H-pred-err}), we have
\begin{align*}
    x(H) &\geq \rho(H/\widehat{u})+\int_1^{\widehat{u}}{\frac{\rho(\widehat{u}/z)}{z}\ud{z}}+\int_{\widehat{u}}^H{\frac{\rho(z/\widehat{u})}{z}\ud{z}}=\rho(H/\widehat{u})+\int_1^{\widehat{u}}{\frac{\rho(y)}{y}\ud{y}}+\int_1^{H/\widehat{u}}{\frac{\rho(y)}{y}\ud{y}}>1,
\end{align*}
contradicting the feasibility of the allocation function $x$. The equality follows by the substitution $y=\widehat{u}/z$ and $y=z/\widehat{u}$ in the two integrals and the second inequality uses our assumption (\ref{eq:ass2}). The proof of the ``only if'' part of Theorem~\ref{thm:pred-error} is complete.

\subsection{Applications of Theorem~\ref{thm:pred-error}}\label{subsec:pred-appl}

We now present applications of Theorem~\ref{thm:pred-error}. In particular, in Corollary~\ref{cor:appl}, we present robustness functions that satisfy the condition of Theorem~\ref{thm:pred-error}. These are just indicative of what Theorem~\ref{thm:pred-error} can give us, and have the characteristic that the quantity $1/\rho(\eta)$ depends polylogarithmically, logarithmically, and sublogarithmically on the prediction error, respectively. The explicit allocation and payment functions of the corresponding auctions then follow by applying the machinery of Section~\ref{subsec:proof-pred-if}. We remark that for the first $\rho$-function in Corollary~\ref{cor:appl}, which does not depend on $H$, the corresponding auction satisfies the claimed robustness requirement even when it is applied to settings with agent valuations from the interval $[1,\infty)$.

\begin{corollary}\label{cor:appl}
Let $H>1$. For the functions
\begin{align*}
    \rho(\eta)&:=\frac{1}{(\pi/\epsilon+1)\cdot (1+\ln^{1+\epsilon}(\eta))}, \mbox{ with $\epsilon\in (0,1]$},\\
    \rho(\eta)&:=\frac{1}{1+2\ln{(1+\ln{H})}} \cdot \frac{1}{1+\ln{\eta}}, \mbox{ and}\\
    \rho(\eta)&:=\frac{1-\epsilon}{2(1+\ln{H})^{1-\epsilon}}\frac{1}{(1+\ln\eta)^{\epsilon}}, \mbox{ with $\epsilon\in (0,1)$},
\end{align*}
there exist $\rho$-robust DSIC intuitive auctions for bidders with valuations from $[1,H]$ and a prediction for the highest bid. 
\end{corollary}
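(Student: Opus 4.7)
The plan is to prove each of the three claims by invoking Theorem~\ref{thm:pred-error} directly; the auctions themselves are then produced by the explicit construction in Section~\ref{subsec:proof-pred-if}. For each candidate $\rho$, I need to verify (i) the regularity hypotheses, namely differentiability, $\rho$ non-increasing, and $\eta\cdot\rho(\eta)$ non-decreasing, and (ii) the integral inequality
\[
\rho(H/\widehat{u})+\int_1^{\widehat{u}}\frac{\rho(z)}{z}\ud{z}+\int_1^{H/\widehat{u}}\frac{\rho(z)}{z}\ud{z}\leq 1
\]
for every $\widehat{u}\in[1,H]$. The regularity checks are routine derivative computations; for instance, for the first function monotonicity of $\eta\cdot \rho(\eta)$ reduces, via the substitution $u=\ln\eta$, to the elementary inequality $1+u^{1+\epsilon}\geq(1+\epsilon)u^{\epsilon}$ for $u\geq 0$ (which one checks by minimizing the LHS$-$RHS at $u=\epsilon$), while for the other two functions the derivative has a transparent sign.

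For the integral inequality, the unifying trick is the substitution $u=\ln z$ (or $u=1+\ln z$), which turns $\int_1^x\rho(z)/z\,\ud{z}$ into an elementary integral. For the second function it yields $\ln(1+\ln x)$, and since both $\widehat{u}$ and $H/\widehat{u}$ lie in $[1,H]$, the two log-log terms together are at most $2\ln(1+\ln H)$; combined with the bound $\rho(H/\widehat{u})\leq 1$, the prefactor $1/(1+2\ln(1+\ln H))$ is just enough to close the inequality. For the third function the substitution produces $((1+\ln x)^{1-\epsilon}-1)/(1-\epsilon)$, and concavity of $x\mapsto x^{1-\epsilon}$ (together with the fact that $(2+\ln H)/2\leq 1+\ln H$ for $H>1$) gives
\[
(1+\ln\widehat{u})^{1-\epsilon}+(1+\ln(H/\widehat{u}))^{1-\epsilon}\leq 2(1+\ln H)^{1-\epsilon};
\]
the small slack left by the factor $(1-\epsilon)/2$ in front of $\rho(H/\widehat{u})$ then absorbs the remaining term $\rho(H/\widehat{u})$, so the sum collapses to at most $1$.

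The main obstacle is the first function, since $\int du/(1+u^{1+\epsilon})$ has no elementary antiderivative. My plan is to use the classical identity $\int_0^\infty du/(1+u^p)=(\pi/p)/\sin(\pi/p)$ with $p=1+\epsilon$, rewrite $\sin(\pi/(1+\epsilon))=\sin(\pi\epsilon/(1+\epsilon))$, and apply the elementary bound $\sin x\geq 2x/\pi$ on $[0,\pi/2]$ to obtain
\[
\int_0^\infty\frac{du}{1+u^{1+\epsilon}}\leq \frac{\pi}{2\epsilon}.
\]
Plugging this bound together with $\rho(H/\widehat{u})\leq\rho(1)=1/(\pi/\epsilon+1)$ into the left-hand side of the condition of Theorem~\ref{thm:pred-error} gives precisely $1/(\pi/\epsilon+1)+(\pi/\epsilon)/(\pi/\epsilon+1)=1$, so the chosen constant $\pi/\epsilon+1$ is tuned for the bound to just close. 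A pleasant byproduct is that this bound is independent of $H$, so the same construction remains feasible as $H\to\infty$, which is exactly the claim about valuations from $[1,\infty)$ for the first function.
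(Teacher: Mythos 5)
Your proposal is correct and takes essentially the same route as the paper: check the monotonicity hypotheses and the integral condition of Theorem~\ref{thm:pred-error}, with the identical $\pi/(2\epsilon)$ bound (via $\sin x\ge 2x/\pi$ applied to the classical $\int_0^\infty \frac{\ud u}{1+u^{1+\epsilon}}$ identity) for the first function and the $\ln(1+\ln x)$ antiderivative for the second; the only nit is that for the second function the bound you actually need is $\rho(H/\widehat{u})\le \rho(1)=\frac{1}{1+2\ln(1+\ln H)}$ (equivalently $\frac{1}{1+\ln(H/\widehat{u})}\le 1$), not merely $\rho(H/\widehat{u})\le 1$ as written. Your concavity argument for the third function, whose proof the paper omits as analogous, is correct and closes the inequality as claimed.
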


\begin{proof}
For the first robustness function, let $c_1=\frac{1}{\pi/\epsilon+1}$. Clearly, $\rho(\eta)$ is decreasing in $\eta$. Furthermore, we can verify that $\eta\cdot \rho(\eta)=c_1\cdot \frac{\eta}{1+\ln^{1+\epsilon}{\eta}}$ is non-decreasing in $\eta$. Indeed, the derivative of the fraction is $\frac{1+\ln^{1+\epsilon}{\eta}-(1+\epsilon)\cdot \ln^{\epsilon}{\eta}}{(\pi/\epsilon+1)\cdot \left(1+\ln^{1+\epsilon}{\eta}\right)^2}$. The numerator is of the form $1+y^{1+1/\epsilon}-(1+\epsilon)\cdot y$, which is minimized to $1-\epsilon^\epsilon$ for $y=\epsilon^\epsilon$. Since $\epsilon \in (0,1]$, the derivative is non-negative, and the monotonicity requirements of Theorem~\ref{thm:pred-error} are satisfied.

Also, observe that 
\begin{align}\label{eq:integral}
\int_1^{\infty}{\frac{\ud{z}}{z(1+\ln^{1+\epsilon}{z})}}
    &= \frac{2}{1+\epsilon}\cdot \int_0^{\infty}{\frac{y^{\frac{1-\epsilon}{1+\epsilon}}\cdot \ud{y}}{1+y^2}} = \frac{\pi}{1+\epsilon}\cdot \csc\left(\frac{\pi\epsilon}{1+\epsilon}\right)\leq \frac{\pi}{2\epsilon}.
\end{align}
The first equality follows using substitution $y=\ln^{\frac{1+\epsilon}{2}}{z}$. The second one follows since $\int_0^\infty{\frac{y^{1-\delta}\cdot \ud{y}}{1+y^2}}=\frac{\pi}{2}\cdot \csc\left(\frac{\pi \delta}{2}\right)$ for $\delta\in (0,1]$. For the inequality, observe that $\frac{\pi\epsilon}{1+\epsilon}\in [0,\pi/2]$ and $\sin(z)\geq \frac{2}{\pi}\cdot z$ for $z\in [0,\pi/2]$. Hence, $\csc(z)=\frac{1}{\sin(z)}\leq \frac{\pi}{2z}$. Then, using equation (\ref{eq:integral}), the LHS of the condition in Theorem~\ref{thm:pred-error} becomes
\begin{align*}
    \int_1^{\widehat{u}}{\frac{\rho(z)}{z} \ud{z}}+\int_1^{H/\widehat{u}}{\frac{\rho(z)}{z} \ud{z}}+\rho(H/\widehat{u})&\leq 2c_1\cdot \int_1^{\infty}{\frac{\ud{z}}{z(1+\ln^{1+\epsilon}{z})}}+c_1 \leq c_1\cdot (\pi/\epsilon+1)=1,
\end{align*}
as desired.

We omit the proof for the second and third robustness functions, which are very similar to (actually, simpler than) the above.
\end{proof}

\section{Conclusion}
We have presented an optimal trade-off for the consistency and robustness of DSIC intuitive single-item auctions that use predictions. This result gives us auctions with constant consistency and robustness $\Omega(\ln^{-1}{H})$, where $H$ upper bounds the agent valuations. We have also given a sufficient and necessary condition for DSIC intuitive auctions with a robustness guarantee that is a function of the prediction error. As a corollary, we have obtained auctions with a robustness that depends only on the prediction error $\eta$ (e.g., as  $\Omega(\ln^{-2}{\eta})$ and even better) and not in $H$, or auctions that have a small (logarithmic or sublogarithmic) dependence on $H$ and better dependence on the prediction error. An obvious extension of our work would be to explore trade-offs between consistency and robustness in more general single-parameter mechanism design environments. In addition to the revenue extracted, these can also involve the social welfare. We expect that our variant of Myerson's Lemma will find more applications there.

\bibliography{sample}

\end{document}